\newcommand{\eg}{e.g.\@\xspace}
\newcommand{\ceil}[1]{\lceil #1\rceil}
\newcommand{\mpisendrecv}{\textsf{MPI\_\-Sendrecv}\xspace}
\newcommand{\mpiisendrecv}{\textsf{MPI\_\-Isendrecv}\xspace}
\newcommand{\mpibcast}{\textsf{MPI\_\-Bcast}\xspace}
\newcommand{\mpigather}{\textsf{MPI\_\-Gather}\xspace}
\newcommand{\mpiscatter}{\textsf{MPI\_\-Scatter}\xspace}
\newcommand{\mpialltoall}{\textsf{MPI\_\-Alltoall}\xspace}
\newcommand{\mpireduce}{\textsf{MPI\_\-Reduce}\xspace}
\newcommand{\mpireducescatterblock}{\textsf{MPI\_\-Reduce\_\-scatter\_\-block}\xspace}
\newcommand{\mpireducescatter}{\textsf{MPI\_\-Reduce\_\-scatter}\xspace}
\newcommand{\mpiallreduce}{\textsf{MPI\_\-Allreduce}\xspace}
\newcommand{\bidirec}[2]{\textsf{Send}(#1)\parallel\textsf{Recv}(#2)\xspace}
\newtheorem{theorem}{Theorem}
\newtheorem{corollary}{Corollary}
\title{Optimal, Non-pipelined Reduce-scatter and Allreduce Algorithms}
\author{Jesper Larsson Tr\"aff\\
  TU Wien\\
  Faculty of Informatics\\
  Institute of Computer Engineering, Research Group Parallel Computing 191-4\\
  Treitlstrasse 3, 5th Floor, 1040 Vienna, Austria}
\date{October 2024, Revised February 2025}
\begin{document}

\maketitle

\begin{abstract}
  The reduce-scatter collective operation in which $p$ processors in a
  network of processors collectively reduce $p$ input vectors into a
  result vector that is partitioned over the processors is important
  both in its own right and as building block for other collective
  operations. We present a surprisingly simple, but non-trivial
  algorithm for solving this problem optimally in $\ceil{\log_2 p}$
  communication rounds with each processor sending, receiving and
  reducing exactly $p-1$ blocks of vector elements. We combine this
  with a similarly simple, well-known allgather algorithm to get a
  volume optimal algorithm for the allreduce collective operation
  where the result vector is replicated on all processors. The
  communication pattern is a simple, $\ceil{\log_2 p}$-regular,
  circulant graph also used elsewhere. The algorithms assume the
  binary reduction operator to be commutative and we discuss this
  assumption. The algorithms can readily be implemented and used for
  the collective operations \mpireducescatterblock, \mpireducescatter
  and \mpiallreduce as specified in the MPI standard. We also observe
  that the reduce-scatter algorithm can be used as a template for
  round-optimal all-to-all communication and the collective \mpialltoall
  operation.
\end{abstract}

\section{Introduction}

Collective \emph{combine} or \emph{reduction operations} in which sets
of processors (processes) cooperate to globally combine or reduce sets
of input vectors and distribute the result in various ways across the
processors are important algorithmic building blocks for applications
on large-scale, parallel computing systems; for recent applications of
mostly known algorithms, see,
\eg,~\cite{CastelloCatalanDolzQuintanaDuato23,NuriyevManumachuAseeriVermaLastovetsky24,NguyenWahibTakano21}.

Given $p$ consecutively ranked processors $r, 0\leq r<p$ with input
vectors $V_r$ with the same number of elements and an associative,
binary operator $\oplus$ that can element wise combine two vectors, the
global combine or reduction problem is to compute
\begin{eqnarray*}
  W & = & \bigoplus_{r=0}^{p-1}V_r \quad .
\end{eqnarray*}
By the associativity of $\oplus$, brackets can be left out.  If the
operator is also commutative, the order of the input vectors does not
matter. The result vector $W$ can be stored at either a designated
root processor, at all processors, or be partitioned into $p$ blocks
of elements with block $W[r]$ of the result stored at processor
$r$. Blocks may have the same or different numbers of vector elements.
These problems are solved by the MPI collectives \mpireduce (reduction
to a designated root process), \mpiallreduce (result vector replicated
on all processes) , \mpireducescatterblock (result vector
partitioned into blocks having exactly the same number
of elements) and \mpireducescatter (result vector partitioned into
blocks of possibly different sizes)~\cite{MPI-4.1}.  The commonly
used term \emph{reduce-scatter} is somewhat unfortunate, since it
suggests that the problem is solved in two stages as a reduction to a
root processor followed by a scatter operation for partitioning the
result vector. Good algorithms do not take this detour, but solve the
problem directly.  A better intuition is therefore to view the
operation as $p$ simultaneous, rooted reductions with each processor
$r$ being the root in a reduction of the blocks with index $r$.  The
reduce-scatter algorithm to be presented in the following follows this
intuition, but the processors cooperate subtly in the reduction of the
blocks. We will alternatively refer to the reduce-scatter operation as
\emph{partitioned all-reduce}.
  
Our algorithms work uniformly for any number of processors $p$. We
assume the operator $\oplus$ to be commutative. The processors
communicate in a $\ceil{\log_2}$-regular circulant graph pattern where
each processor has $\ceil{\log_2 p}$ incoming and $\ceil{\log_2 p}$
outgoing neighbor processors. Communication is assumed to be only
one-ported (a processor can be engaged in one communication operation
at a time), but to allow a processor to send a block to some processor
and at the same time receive a block from some other
processor~\cite{BarNoyKipnis94,Bruck97}. This {simultaneous
  send/receive} model corresponds to what the combined \mpisendrecv
operation of MPI is meant to accomplish~\cite{MPI-4.1}.

The reduce-scatter (partitioned all-reduce) and allreduce operations have been
intensively researched and many algorithms and implementations, taking
aspects of the communication system (different topologies, different,
hierarchical organizations) into account have been proposed.  A
primary starting point for the algorithms of this paper is the well
known and often used power-of-two, straight doubling, circulant graph,
dissemination allgather (concatenation, all-to-all broadcast)
algorithm by Bruck et al.~\cite{Bruck97}. The algorithms by Bar-Noy
and others for allreduce (computing census
functions)~\cite{BarNoyKipnisSchieber93,BruckHo93} that use a cleverly
adjusted doubling scheme have likewise been a starting point
for subsequent algorithms and generalizations, both for the allreduce
and for the reduce-scatter (partitioned all-reduce)
operation~\cite{BernaschiIannelloLauria03,Traff23:circulant}.

Well-known algorithms assuming either a ring or a fully connected
communication network can solve the problem in $p-1$ communication
rounds, in which each processor sends and receives a partial block
result $W[i]$ to and from a preceding and a succeeding processor and
performs a reduction on the received partial result block, see for
instance~\cite{ChanHeimlichPurkayasthavandeGeijn07,Iannello97,PatarasukYuan09}.
With a ring, the $\oplus$ operator must be commutative whereas with a
fully connected network, the algorithm can also work for
non-commutative operators~\cite{Iannello97}. These algorithms are
optimal in the number of blocks that are sent and received per
processor, namely $p-1$, but have a linear number of communication
rounds which is very unattractive for small block sizes and large
number of processors $p$.  The lower bound on the number of
communication rounds is clearly $\ceil{\log_2 p}$, as is well
known~\cite{Bruck97}.

The reduce-scatter problem can be solved in $\log_2 p$ communication
rounds with the optimal number $p-1$ of sent, received and reduced
partial result blocks with a $\log_2 p$-dimensional hypercube or
butterfly communication pattern. Likewise, the allreduce problem can
be solved with twice as many communication rounds, partial result
blocks and send-receive operations. A drawback of these simple
algorithms is that they do not readily extend to arbitrary numbers of
processors (not only powers-of-two). This problem has often been
addressed and extensions that are better than the trivial reduction to
the nearest power-of-two have been proposed and
implemented~\cite{Traff04:reduction,Traff05:redscat}. Hypercube or
butterfly pattern algorithms with some care work also for
non-commutative operators.

For the reduce (reduction to root) and allreduce operations, pipelined
fixed-degree (binary) trees are also used. Such algorithms are simple
to implement and can work for any number of processors, sometimes also
for non-commutative operators (depending on how trees are constructed
and numbered), but have the disadvantage of losing effective bandwidth
proportional to the arity of the trees. Likewise, there is a latency
penalty proportional to the size of the pipeline block size, which can
in addition be difficult to select well. Some of these problems can be
alleviated by using two trees simultaneously~\cite{Traff09:twotree}.

A standard observation is that allreduce can be accomplished by
performing a reduce-scatter (partitioned all-reduce) operation
followed by an allgather operation. Lower bound arguments
in~\cite{PatarasukYuan09,BarnettLittlefieldPaynevandeGeijn95} show
that when the total number $p(p-1)$ of required applications of the
$\oplus$ operator to blocks of elements are evenly shared by the
processors, it is required to send and receive $2(p-1)$ partial result
blocks per processor. At least $\ceil{\log_2 p}$ communication rounds
are required. Using the reduce-scatter (partitioned all-reduce) and allgather
algorithm of the present paper, the bound on the number of blocks is
achieved with $2\ceil{\log_2 p}$ communication rounds.

We finally observe that all-to-all communication can be accomplished by
a (commutative) reduce-scatter operation by taking concatenation as the
operator.

\section{The Algorithms}

We are given $p$ consecutively ranked processors $r,0\leq r<p$ each of
which can in a communication step simultaneously send a block and
receive a block from two other, possibly different processors. The
reduce-scatter (partitioned all-reduce) and allreduce algorithms
uniformly follow a communication pattern in which each processor $r$
in a communication round $k$ sends a block of elements to a
\emph{to-processor} $(r+s_k)\bmod p$ and receives a block of elements
from a \emph{from-processor} $(r-s_k+p)\bmod p$ for certain skips (or
jumps) $s_k$. A graph $C_p^{s_o,s_1,\ldots s_{q-1}}$ with vertices
$r,0\leq r<p$ and directed edges $(r\pm s_k+p)\bmod p$ is called a
\emph{circulant graph} with \emph{skips} (jumps) $s_k, k=0,\ldots,q-1$
(sometimes a \emph{loop network},
see~\cite{BermondComellasHsu95}). The skips $s_k$ are chosen by
repeated halving of $p$ with rounding up until $s_k=1$,
$s_k=\ceil{s_{k+1}/2}$. The number of such roughly halving steps
required is clearly $q=\ceil{\log_2 p}$.

\subsection{A Simple, Uniform Reduce-scatter Algorithm}

For the reduce-scatter problem, each processor takes an input vector
$V_r$ of elements and each $V_r$ is partitioned in the same way into $p$
disjoint blocks of elements such that each processor has $p$ blocks of
input each with a given, known number of elements. The number of
elements per block may be the same (as in the \mpireducescatterblock
operation) or may be different (as in the \mpireducescatter
operation).  The input blocks for processor $r, 0\leq r<p$ are indexed
as $V_r[i], 0\leq i<p$.  The reduce-scatter operation computes for
each processor $r$ the sum
\begin{eqnarray*}
  W = \bigoplus_{i=0}^{p-1}V_i[r] \quad .
\end{eqnarray*}

A partial result block is any sum of input blocks for some subset of
processors.  A complete reduce-scatter (partitioned all-reduce)
algorithm for processor $r$ is shown as
Algorithm~\ref{alg:blockreduction}. Each processor maintains partial
result blocks $R[i]$ for some $i$ that will contribute towards the
final result $W$ both for processor $r$ itself and for other
processors ranked after $r$ (modulo $p$). The partial result blocks
are maintained in the same way for all processors such that for
processor $r$, $R[i], 0\leq i<p$ is a partial result that will
contribute to the final result at processor $(r+i)\bmod p$. This is
achieved initially by a rotated copy of the input blocks $V[(r+i)\bmod
  p]$ into $R[i]$. In each communication round, a consecutive sequence
of partial result blocks $R[s\ldots s'-1]$ is sent to the to-processor
$(r+s)\bmod p$ and a consecutive sequence with the same number of
blocks is received from the from-processor $(r-s+p)\bmod p$ and added
into $R[0\ldots s'-s-1]$ using the $\oplus$ operator. Thus, for each
$i, 1\leq i<p$, the partial result for the block $R[i]$ is sent once
as part of a consecutive sequence of blocks. Block $R[0]$ is kept as
$W$ and will eventually store the result of the reduction for block
$r$.

\begin{algorithm}
  \caption{The $p$-block reduce-scatter (partitioned all-reduce)
    algorithm for processor $r,0\leq r<p$. Each processor has an input
    vector $V$ of $p$ blocks of elements. Processor $r$ receives in
    $W$ the computed reduction over the $r$th input blocks,
    $W=\oplus_{i=0}^{p-1}V[i]$. The commutative operator for pairwise
    reduction of blocks is $\oplus$.}
  \label{alg:blockreduction}
  \begin{algorithmic}
    \Procedure{PartitionedAllReduce}{$V[p],W$}
    \State $W\gets V[r]$
    \For{$i=1,\ldots,p-1$} $R[i]\gets V[(r+i)\bmod p]$
    \EndFor \Comment $R[0]$ will be kept in $W$
    \State $s\gets p$
    \While{$s>1$}
    \State $s',s\gets s,\ceil{s/2}$ \Comment Halve and round up
    \State $t,f\gets (r+s)\bmod p, (r-s+p)\bmod p$ \Comment To- and from processors
    \State $\bidirec{R[s\ldots s'-1],t}{T[0\ldots s'-s-1],f}$
    \State $W\gets W\oplus T[0]$
    \For{$i=1,\ldots,s'-s-1$}
    $R[i]\gets R[i]\oplus T[i]$
    \EndFor
    \EndWhile
    \EndProcedure
\end{algorithmic}
\end{algorithm}

The sequence of skips (jumps) $s_k$ for the circulant graph are
computed incrementally by halving $s$ from the previous iteration and
rounding up.  It can easily be seen that any $i$ can be written as a
sum of different such skips $s_k\leq i$, which means that for any
processor $r$, there is a(t least one) path from any other processor
$(r-i+p)\bmod p$ consisting of different edges $s_k$. The computation
of partial results are performed along such paths with leaf processors
contributing input block $V[r]$ and interior processors contributing a
partial result including their own input block. For each processor
$r$, there is a spanning tree directed towards $r$ formed by combining
certain such paths along which the result for processor $r$ is
computed. The decomposition of $i$ into sums of different $s_k$ is not
necessarily unique, and depends on $p$. The spanning tree to $r$ is
built incrementally by hooking trees to roots with edges of length $s$
in each iteration.

\begin{theorem}
  \label{thm:blockreduce}
  On $p$ input vectors partitioned into $p$ blocks,
  Algorithm~\ref{alg:blockreduction} solves the reduce-scatter
  (partitioned all-reduce) problem in $\ceil{\log_2 p}$ send-receive
  communication rounds. Each processor sends and receives exactly
  $p-1$ partial result blocks of elements and performs exactly $p-1$
  applications of the commutative reduction operator $\oplus$ on
  partial result blocks.
\end{theorem}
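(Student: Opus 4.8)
The plan is to dispatch the three counting claims by a short telescoping argument and then to prove correctness by following the fate of a single input block.

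\textbf{Counting.} Write $s_0=p$ and, for the $k$-th pass of the \textbf{while} loop, let $s'=s_{k-1}$ and $s=s_k=\ceil{s_{k-1}/2}$; the loop runs exactly for $k=1,\ldots,q$ with $q$ the least index at which $s_q=1$, and by the halving argument already given this $q$ equals $\ceil{\log_2 p}$, which is the claimed number of send-receive rounds. In pass $k$ processor $r$ sends the $s_{k-1}-s_k$ blocks $R[s_k\ldots s_{k-1}-1]$ and receives $s_{k-1}-s_k$ blocks $T[0\ldots s_{k-1}-s_k-1]$, so $\sum_{k=1}^{q}(s_{k-1}-s_k)=s_0-s_q=p-1$ blocks are sent and $p-1$ received. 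Pass $k$ applies $\oplus$ once in $W\gets W\oplus T[0]$ and $s_{k-1}-s_k-1$ further times in the loop updating $R[i]$, for $s_{k-1}-s_k$ in total, again telescoping to $p-1$.

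\textbf{A ``live-slot'' invariant.} Put $d_k=s_{k-1}-s_k=\floor{s_{k-1}/2}$; since $s_{k-1}\ge 2$ throughout the loop one has the elementary inequalities $1\le d_k\le s_k\le d_k+1$ and $s_k\le s_{k-1}<2s_k$. Using these I would prove, by induction on $k$, that after pass $k$ the slots $R[0\ldots s_k-1]$ at every processor are exactly the ``active'' ones (those read or written in some later pass) and that every input contribution currently resides in one of them; that $R[i]$ at processor $r$ is throughout a partial result destined for processor $(r+i)\bmod p$; and that the $d_k$ blocks $R[s_k\ldots s_{k-1}-1]$ sent by $r$ in pass $k$ arrive at the receiver $(r+s_k)\bmod p$ precisely in the $d_k$ slots $R[0\ldots d_k-1]$ updated there in that same pass. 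The only delicate point is the effect of rounding up: when $s_{k-1}$ is odd there is one ``middle'' slot $R[d_k]$ that is neither sent nor updated in pass $k$, and one must check that it stays active and is sent in a later pass without any block being counted twice --- which is exactly what $d_k\le s_k$ and $s_k\le s_{k-1}<2s_k$ guarantee.

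\textbf{Tracking a single contribution.} Fix a target processor $\rho$ and a source processor $\sigma$ and follow the input block $V_\sigma[\rho]$, which initially sits in slot $i_0=(\rho-\sigma)\bmod p$ of processor $\sigma$. By the invariant this block occupies a single (processor, slot) position at each step: in pass $k$ either $i\ge s_k$ and the block moves from $(\pi,i)$ to $\bigl((\pi+s_k)\bmod p,\ i-s_k\bigr)$, or $i<s_k$ and the block stays with its slot index unchanged; in either case its slot index afterwards lies in $[0,s_k-1]$ (using $s_k\le i\le s_{k-1}-1<2s_k$ in the first case). Since $s_q=1$, after the last pass the block is in slot $0$; and because each move shifts the processor index up and the slot index down by the same skip $s_k$, the net shift is $i_0$, so the block finishes at processor $(\sigma+i_0)\bmod p=\rho$ in $R[0]=W$. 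Moreover $\oplus$ combines two blocks only when two slots destined for the same target merge, and by the invariant these come from distinct prior positions, hence carry disjoint source sets; so each $V_\sigma[\rho]$ is added into processor $\rho$'s $W$ exactly once and none is dropped. Hence $W$ at $\rho$ equals $\bigoplus_\sigma V_\sigma[\rho]$ in some order, which by commutativity and associativity of $\oplus$ is $\bigoplus_{i=0}^{p-1}V_i[\rho]$; as $\rho$ was arbitrary, the algorithm is correct.

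I expect the live-slot invariant to be the only real obstacle; the counting is immediate, and once the invariant is in place the single-contribution argument is routine.
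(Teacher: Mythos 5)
Your proposal is correct and takes essentially the same route as the paper: the counting is the identical telescoping of $\sum_k(s_{k-1}-s_k)=p-1$, and your ``live-slot'' invariant (active slots $R[0\ldots s_k-1]$ hold a partition of the source contributions, each destined for $(r+i)\bmod p$, with the odd-$s_{k-1}$ middle slot $R[d_k]$ simply carried over) is exactly the paper's invariant that the $R[i]$, $0\le i<s_k$, store partial results over disjoint subtrees spanning all of $\{0,\ldots,p-1\}$. Your per-block trajectory $(\pi,i)\mapsto((\pi+s_k)\bmod p,\,i-s_k)$, justified by $s_k\le i\le s_{k-1}-1<2s_k$, is the same merge step the paper phrases as hooking $T_{i+s_{k+1}}$ into $T_i$, just read source-by-source instead of tree-by-tree.
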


\begin{proof}
  The communication round complexity is obvious, since $q=\ceil{\log_2
    p}$ roughly halving steps are needed for the \textbf{while}-loop
  to terminate. Let $s_k$ be the value of $s$ before the $k$th
  iteration, $k=0,1,\ldots,q-1$.  Starting with $s_0=p$, by the
  repeated halving, clearly $p=s_0>s_1>\ldots>s_{q-1}=1$.  In
  iteration $k$, each processor sends and receives $s_{k}-s_{k+1}$
  blocks of elements and applies the $\oplus$ operator also to this
  number of blocks. Since $\sum_{k=0}^{q-1} (s_k-s_{k+1}) =
  s_0-s_{q-1} = p-1$, the bounds on the communication and computation
  volume follows.

  The algorithm maintains for each processor $r$ the invariant that
  for $0\leq i<s_k$, $R[i]$ (with $W=R[0]$) stores a partial result
  over a subtree $T_i$ rooted at $i$ with subtrees $T_i$ and $T_j$
  being disjoint for $i\neq j$ but spanning all $i,0\leq i<p$. In
  other words, each processor $r$ maintains a spanning forest over all
  $i,0\leq i<p$.  The invariant holds before the first iteration of
  the \textbf{while}-loop since initially each $T_i$ is a singleton
  storing the input $R[i]=V[(r+i)\bmod p]$.  After the last iteration
  where $s_{q-1}=1$ the invariant implies that for each processor $r$,
  $R[0]=W$ is the result of a reduction over a spanning tree $T_0$ of
  the blocks $V_{(r-i+p)\bmod p}[r]$.

  In iteration $k$, subtrees $T_j$ represented by $R[j],s_{k+1}\leq
  j<s_k$ are hooked into subtrees $T_i, 0\leq i<s_k-s_{k+1}$, with
  $j=i+s_{k+1}$, $T_{i+s_{k+1}}\stackrel{s_{k+1}}{\rightarrow} T_i$,
  by an edge labeled with the skip $s_{k+1}$, and the partial sums
  represented by the subtrees $R[i],0\leq i<s_k-s_{k+1}$ updated by
  the partial sums $R[j]$ sent from processor $(r-s_{k+1}+p)\bmod
  p$. Partial results $R[j],1\leq j<p$ are sent once, and $W=R[0]$
  never. Thus, the invariant is maintained, and in each iteration, the
  number of disjoint subtrees decreases by $s'-s$.
\end{proof}

\begin{figure}
  \begin{center}
    \begin{tikzpicture}[scale=0.75, transform shape]
      \node {0}
            [<-]
            child {node {11}}
            child {node {6}
              child {node {17}}
            }
            child {node {3}
              child {node {14}}
              child {node {9}
                child {node {20}}
              }
            }
            child[missing]
            child {node {2}
              child {node {13}}
              child {node {8}
                child {node {19}}
              }
              child {node {5}
                child {node {16}}
              }
            }
            child[missing]
            child[missing]
            child {node {1}
              child {node {12}}
              child {node {7}
                child {node {18}}
              }
              child {node {4}
                child {node {15}}
                child {node {10}
                  child {node {21}}
                }
              }
            }
            ;
    \end{tikzpicture}
  \end{center}
  \caption{The tree implicitly constructed by each processor by
    Algorithm~\ref{alg:blockreduction} for $p=22$.}
  \label{fig:tree22}
\end{figure}

\paragraph{Example:} It is illustrative to trace the way 
Algorithm~\ref{alg:blockreduction} builds the reduction trees for the
blocks. For instance, for $p=22$, the skips are $s=11,6,3,2,1$.  In
each communication round, a tree $T_{i+s}$ is hooked into tree $T_i$
for $0\leq i< s'-s$ by an edge of length $s$. When the algorithm
terminates with $s=1$, a tree as depicted in Figure~\ref{fig:tree22}
has been constructed (implicitly).  For any given processor, say
$r=p-1=21$, the order in which input blocks are reduced can be found
by a depth-first traversal of this tree, subtracting the node label
$i$ from $r$ (modulo $p$) to get the index of the input block
$V[(r-i+p)\bmod p]$.  Processor $r=21$ receives partial results from
processor $21-11=10, 21-6=15, 21-3=18, 21-2=19$ and finally
$21-1=20$. Let $x_i$ denote the input block of processor $i$ for
processor $r$, $x_i=V_i[r]$. Processor $r=21$ then computes
\begin{eqnarray*}
  W = \sum_{i=0}^{p-1} x_i & = & x_{21}+x_{10} + \\
  & & (x_{15}+x_{4}) + \\
  & & (x_{18}+x_{7}+(x_{12}+x_{1})) + \\
  & & (x_{19}+x_{8}+(x_{13}+x_{2})+(x_{16}+x_{5})) + \\
  & & (x_{20}+x_{9}+(x_{14}+x_{3})+(x_{17}+x_{6}+(x_{11}+x_{0}))) \\
\end{eqnarray*}
where each line shows the received partial sums in the five communication
rounds.
\newline

If we assign time costs to the bidirectional send-receive operations,
the total time of the algorithm for $p$ processors and vectors of $m$
elements can be estimated.

\begin{corollary}
  In a homogeneous, linear-affine transmission cost model where
  concurrent, bidirectional sending and receiving blocks of $m/p$
  elements by all processors in a communication round can be charged
  $\alpha+\beta m/p$ time and pairwise reduction of two $m/p$-element
  vectors by the binary $\oplus$ operator takes time $\gamma m/p$, the
  reduce-scatter (partitioned all-reduce) problem on input vectors of $m$
  elements uniformly partitioned into blocks of $m/p$ elements is solved
  in time
  \begin{eqnarray*}
    T(m,p) & = &
    \alpha\ceil{\log_2 p}+\beta\frac{p-1}{p}m+\gamma\frac{p-1}{p}m
  \end{eqnarray*}
  by Algorithm~\ref{alg:blockreduction}. The time for the initial copy
  of the $m$ input elements into $R[i]$ adds another term of at most
  $\gamma m$.
\end{corollary}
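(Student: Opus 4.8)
The plan is to split the running time into the three additive terms of the model — one latency $\alpha$ per communication round, a transmission cost $\beta$ per element sent, and a cost $\gamma$ per element touched by the reduction operator — and to read off every count directly from Theorem~\ref{thm:blockreduce} and the code of Algorithm~\ref{alg:blockreduction}, so that essentially no calculation remains.

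First I would note that the \textbf{while}-loop executes exactly $\ceil{\log_2 p}$ times, and that each iteration performs a single concurrent, bidirectional send-receive — sending the consecutive block range $R[s\ldots s'-1]$ and receiving an equally long range — followed only by local reductions. Under the one-ported simultaneous send/receive model assumed throughout, this is one communication round, charged $\alpha$ plus $\beta$ times the number of elements sent in it; hence the latency is paid once per round and contributes $\alpha\ceil{\log_2 p}$ overall. It is worth stating explicitly that the blocks a processor sends in a round form one consecutive range, so the round's message size — and therefore the sum over rounds of the $\beta$ terms — is well defined.

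Next, for the $\beta$ and $\gamma$ terms I would invoke the exact counts already established in Theorem~\ref{thm:blockreduce}: over the whole execution each processor sends, and symmetrically receives, exactly $p-1$ blocks, and performs exactly $p-1$ pairwise applications of $\oplus$. Since the input is partitioned uniformly into blocks of $m/p$ elements, the total transmitted volume is $(p-1)m/p$ elements, charged $\beta\frac{p-1}{p}m$, and the reduction work amounts to $(p-1)m/p$ element-wise applications of $\oplus$, charged $\gamma\frac{p-1}{p}m$; adding the three contributions gives $T(m,p)=\alpha\ceil{\log_2 p}+\beta\frac{p-1}{p}m+\gamma\frac{p-1}{p}m$. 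For the additive initialization term I would observe that the opening lines copy $V[r]$ into $W$ and $V[(r+i)\bmod p]$ into $R[i]$ for $i=1,\ldots,p-1$, i.e.\ all $m$ input elements, and that charging an element-wise copy no more than an element-wise application of $\oplus$ bounds this step by $\gamma m$.

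I do not expect a genuine obstacle here: the argument is pure bookkeeping layered on Theorem~\ref{thm:blockreduce}. The only point deserving a sentence of care is the bundling of a round's send together with its simultaneous receive into a single $\alpha+\beta(\cdot)$ charge rather than two separate ones — this is exactly what the bidirectional, \mpisendrecv-style cost model licenses — together with the remark that the model as written simply adds the communication and computation costs, without assuming any overlap between them.
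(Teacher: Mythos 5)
Your proposal is correct and matches the paper's (implicit) argument exactly: the paper states this corollary without a separate proof precisely because it is the bookkeeping you describe, reading the round count $\ceil{\log_2 p}$ and the $p-1$ sent/received/reduced blocks of $m/p$ elements each directly from Theorem~\ref{thm:blockreduce}, and bounding the initial rotated copy of all $m$ elements by $\gamma m$.
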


The proof of Theorem~\ref{thm:blockreduce} did not use any particular
properties of the roughly halving scheme for computing $s$
except for the fact that it allows any $i,0\leq i<p$ to be written
as a sum of different $s_k$ values of $s$. The algorithm can therefore be
adopted to other communication patterns leading to different number of
communication rounds.

\begin{corollary}
  The reduce-scatter problem can be solved in $q$ communication rounds
  on any circulant graph $C_p^{s_0,s_1,\ldots s_{q-1}}$ with skips
  $s_0>s_1>\ldots >s_{q-1}=1$ provided that any $0<i<p$ can be
  written as a sum of different $s_k, 0<k<q$.
\end{corollary}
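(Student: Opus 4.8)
The plan is to adapt the proof of Theorem~\ref{thm:blockreduce}, replacing the contiguous block intervals $R[s\ldots s'-1]$ used by Algorithm~\ref{alg:blockreduction} with a fixed family of (in general non-contiguous) index sets $A_0,\ldots,A_{q-1}$, one per round, and then re-running the spanning-forest invariant. The index sets come from a consistent system of representations of the indices $1,\ldots,p-1$ as sums of distinct skips, and this is the only place where the hypothesis of the corollary is used.

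Concretely, for $0\leq k\leq q$ let $\Sigma_{\geq k}$ denote the set of all sums of subsets of $\{s_k,s_{k+1},\ldots,s_{q-1}\}$, so that $\Sigma_{\geq q}=\{0\}$, $\Sigma_{\geq k}=\Sigma_{\geq k+1}\cup(s_k+\Sigma_{\geq k+1})$, and, by the hypothesis, $\{0,1,\ldots,p-1\}\subseteq\Sigma_{\geq 0}$. For $v\geq 1$ put $\psi(v)=\max\{k:v\in\Sigma_{\geq k}\}$ (and $\psi(0)=q$): this is the largest round index for which $v$ can still be written using only the skips $s_{\psi(v)},\ldots,s_{q-1}$, and $\psi(v)\in\{0,\ldots,q-1\}$ for $v\geq 1$ since $v\notin\Sigma_{\geq q}=\{0\}$ but $v\in\Sigma_{\geq 0}$. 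Setting $A_k=\{v:1\leq v<p,\ \psi(v)=k\}$, the sets $A_0,\ldots,A_{q-1}$ partition $\{1,\ldots,p-1\}$. The generalized algorithm then runs, for $k=0,\ldots,q-1$: processor $r$ sends the blocks $\{R[v]:v\in A_k\}$ to $(r+s_k)\bmod p$, concurrently receives the matching blocks from $(r-s_k+p)\bmod p$ into $T$, and performs $R[v-s_k]\gets R[v-s_k]\oplus T[v]$ for each $v\in A_k$, with $W=R[0]$ kept throughout as in Algorithm~\ref{alg:blockreduction}.

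The structural fact that makes this well defined — and the only step that differs from the proof of Theorem~\ref{thm:blockreduce} — is a nesting property of $\psi$: if $v\in A_k$ then $v\in\Sigma_{\geq k}\setminus\Sigma_{\geq k+1}$, hence $v-s_k\in\Sigma_{\geq k+1}$, hence $v-s_k=0$ or $\psi(v-s_k)\geq k+1>k$. From this: (i) the target block $R[v-s_k]$ of a round-$k$ update is sent only in the strictly later round $\psi(v-s_k)$ (or is $R[0]$, never sent), so the contribution it receives in round $k$ is still forwarded toward $R[0]$; (ii) no index is at once a source in $A_k$ and a target of a round-$k$ update, so sending the pre-round values and then updating is consistent; (iii) $v\mapsto v-s_{\psi(v)}$ is a parent function on $\{0,\ldots,p-1\}$ along whose edges the value strictly decreases, hence it defines a spanning tree directed to the root $0$, with edge-skips decreasing from the leaves to the root. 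With $A_k$ in place of the interval $R[s\ldots s'-1]$, the spanning-forest invariant of Theorem~\ref{thm:blockreduce} then carries over without change: initially each $R[v]$ is the singleton input block $V[(r+v)\bmod p]$; in round $k$, uniformly at every processor, the trees rooted at $v\in A_k$ are hooked into the trees rooted at $v-s_k$ by an edge of skip $s_k$, with the update coming from the block sent by processor $(r-s_k+p)\bmod p$; and after round $q-1$ each processor $r$ holds $R[0]=W=\bigoplus_{i=0}^{p-1}V[i]$. Since each $v\in\{1,\ldots,p-1\}$ lies in exactly one $A_k$, every processor sends, receives and reduces exactly $p-1$ blocks, over exactly $q$ rounds.

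The main obstacle, relative to Theorem~\ref{thm:blockreduce}, is precisely that the send sets $A_k$ need not be intervals: a roughly-halving sequence automatically satisfies $s_{k-1}-s_k\leq s_k$, which is exactly what keeps the sent blocks contiguous and keeps the index $i-s$ below $s$, whereas a circulant graph satisfying only the subset-sum hypothesis need not have this property (for instance the skips $7,3,2,1$ for $p=14$, where under the interval scheme $6=2\cdot 3$ would be hooked into $3$ in the very round in which $3$ is hooked into $0$). Substituting the $\psi$-nesting for the interval bookkeeping is therefore the crux; everything else is a routine transcription of the proof of Theorem~\ref{thm:blockreduce}, and the $q$-round bound is immediate from the construction.
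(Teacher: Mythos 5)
Your proof is correct, and it does genuinely more than the paper does: the paper ``proves'' this corollary in one sentence, asserting that the proof of Theorem~\ref{thm:blockreduce} used no property of the roughly halving scheme beyond the subset-sum condition, and leaves the adaptation of Algorithm~\ref{alg:blockreduction} implicit. You rightly observe that this assertion is too strong. The interval bookkeeping of Algorithm~\ref{alg:blockreduction} --- send $R[s\ldots s'-1]$, fold the received blocks into $R[0\ldots s'-s-1]$ --- silently relies on $s'-s\leq s$, which repeated halving guarantees but the subset-sum hypothesis does not; your instance $p=14$ with skips $7,3,2,1$ satisfies the hypothesis yet breaks the literal algorithm, since in the skip-$3$ round block $3$ is both sent (toward block $0$) and updated (by block $6$), so the partial sums routed through block $6$ are stranded and the contributions at offsets $6$ and $13$ never reach $W$. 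Your repair --- replacing the intervals by the sets $A_k$ of indices whose maximal-suffix representation begins with $s_k$, together with the nesting property $\psi(v-s_{\psi(v)})>\psi(v)$ guaranteeing that every block receives all its updates strictly before the round in which it is sent --- is exactly the missing ingredient, and with it the spanning-forest invariant of Theorem~\ref{thm:blockreduce} does carry over verbatim, giving $q$ rounds and $p-1$ blocks sent, received and reduced per processor. Two minor remarks: your schedule does not specialize to Algorithm~\ref{alg:blockreduction} even for the halving skips (for $p=22$ it places $11$ in $A_1$ via $11=6+3+2$ rather than in $A_0$), which is harmless since decompositions are not unique; and the non-contiguous $A_k$ forfeit the consecutive-buffer property the paper exploits in its implementation discussion, which is a price of the generality rather than an error. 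The corollary's quantification over $0<k<q$ (excluding $s_0$) reflects the paper's convention that $s_0=p$ is a degenerate skip; your reading, summing over the actual skips of the circulant graph, is the one the argument needs.
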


Different circulant graphs may be more or less suited to be embedded
into a concrete, given communication network, and some may conceivably
perform better than or different from the roughly halving scheme of
Algorithm~\ref{alg:blockreduction}. It is an open, experimental
question, which sequence of skips may perform best in practice on a
concrete high-performance system.

\paragraph{Examples:} The reduce-scatter problem is solved on a fully
connected network in $p-1$ communication steps by taking
$s_k=p,p-1,p-2,\ldots,1$. This algorithm can easily be made to work
also for non-commutative operators and corresponds to the folklore
algorithm also stated in~\cite{Iannello97}. A straight power-of-two
halving scheme, as used by Bruck et al.~\cite{Bruck97} will lead to
another $\ceil{\log_2 p}$ round algorithm by taking $s_0=p$ and
letting $s_k, k>0$ be the largest power-of-two smaller than
$s_{k-1}$. We can get an algorithm running in a square root of $p$
number of rounds by taking $s_k=p-k\ceil{\sqrt{p}}$ as long as
$s_k>\ceil{\sqrt{p}}$ and for smaller $p$ use a either of the above
schemes.
\newline

Algorithm~\ref{alg:blockreduction} does not make any assumptions on
the way input and result vectors are partitioned into disjoint blocks,
except for requiring that all vectors are partitioned in same
way. Thus, the number of elements in block $V_r[i]$ and block $V_r[j]$
may differ, but the algorithm will work correctly as long as the
number of elements in $V_i[r]$ is equal to the number of elements in
$V_j[r]$ for all $r$. Let $m$ be the total number of elements over all
blocks $V_r[i]$ (which is the same for all processors $r$). Since in
the extreme case, all elements are concentrated in one block only,
partial results of all $m$ elements will be sent and/or received and
reduced in every communication round. This leads to the following
observation.

\begin{corollary}
  On $p$ input vectors of $m$ elements partitioned into $p$ blocks of
  possibly different numbers of elements,
  Algorithm~\ref{alg:blockreduction} solves the reduce-scatter
  (partitioned all-reduce) problem in time at most $\ceil{\log_2
    p}(\alpha+\beta m+\gamma m)$, assuming a homogeneous,
  linear-affine transmission cost model with constant latency $\alpha$
  and transmission and computation cost per unit $\beta,\gamma$,
  respectively.
\end{corollary}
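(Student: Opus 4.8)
The plan is to reuse Theorem~\ref{thm:blockreduce} for the round count and then bound the cost of a single round by $\alpha+\beta m+\gamma m$, irrespective of how unevenly the $m$ elements are spread over the $p$ blocks. First I would recall from the proof of Theorem~\ref{thm:blockreduce} that the \textbf{while}-loop executes exactly $q=\ceil{\log_2 p}$ times, with $s$ running through $s_0=p>s_1>\cdots>s_{q-1}=1$, and that in iteration $k$ processor $r$ sends the range $R[s_{k+1}\ldots s_k-1]$ to its to-processor, receives an equally long range into $T[0\ldots s_k-s_{k+1}-1]$, and applies $\oplus$ to the range $R[0\ldots s_k-s_{k+1}-1]$.

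The key observation is that each of these three block ranges consists of partial-result blocks whose indices $i$ form a consecutive, cyclic interval of $\{0,1,\ldots,p-1\}$: the slices $[s_{k+1},s_k-1]$ and $[0,s_k-s_{k+1}-1]$ are intervals, and the initial rotation $R[i]\gets V[(r+i)\bmod p]$ is a bijection between slots and input-block indices, so the block in slot $R[i]$ at processor $r$ always carries exactly the elements of the input-partition block with index $(r+i)\bmod p$. Since the partition blocks are pairwise disjoint and together comprise all $m$ elements, the total number of elements in any cyclic interval of block indices — hence in each of the three ranges above — is at most $m$. (The only consistency point to check is that the received range $T[0\ldots s_k-s_{k+1}-1]$ matches, block by block, the range $R[s_{k+1}\ldots s_k-1]$ sent by the from-processor $(r-s_{k+1}+p)\bmod p$; both correspond to the block indices $r,r+1,\ldots,(r+s_k-s_{k+1}-1)\bmod p$, so the element counts agree.) Given this, in the linear-affine model the bidirectional send-receive of round $k$ costs at most $\alpha+\beta m$, the subsequent reductions cost at most $\gamma m$, and summing over the $q=\ceil{\log_2 p}$ rounds gives the claimed bound. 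As in the preceding corollary, the initial local copy of the input into $R$ touches $m$ elements once, adding a further $\gamma m$; since $q\ge 1$ this term is dominated and I would absorb it into the estimate (or state it separately, exactly as done above).

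I do not anticipate a genuine obstacle. The one thing requiring care is the ``cyclic-interval of input blocks'' claim, i.e.\ confirming that throughout the execution the algorithm never sends, receives, or reduces anything other than a cyclic interval of the $p$ input-partition blocks; this is immediate from inspecting the index arithmetic in Algorithm~\ref{alg:blockreduction}. The remainder is the same summation as in the first corollary, with the even per-round volume $m/p$ replaced by the worst-case per-round volume $m$.
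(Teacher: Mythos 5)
Your proposal is correct and follows essentially the same route as the paper, which justifies the corollary by observing that in the worst case (all $m$ elements concentrated in few blocks) each of the $\ceil{\log_2 p}$ rounds sends, receives and reduces at most $m$ elements, giving a per-round cost of at most $\alpha+\beta m+\gamma m$. Your additional verification that every sent, received and reduced range corresponds to a cyclic interval of disjoint partition blocks (hence at most $m$ elements) is a careful spelling-out of what the paper leaves implicit, not a different argument.
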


The algorithm can therefore be used also for \mpireducescatter as long
as the sizes of the input blocks do not differ too much, and in the
extreme case of only one block, also for \mpireduce (reduction to
root) as long as the number of elements is not too large compared to
$p$.

For large, irregular reduce-scatter problems where the sizes of the
blocks for the processors can differ significantly, pipelined
algorithms, also using a circulant graph communication pattern, can
perform much better, depending only linearly on the total problem size
$m$, see~\cite{Traff24:logschedule}.

As the example showed, the applications of $\oplus$ are not done in
rank order, and the algorithm assumes and exploits heavily
the commutativity of the $\oplus$ operator. However, all processors
perform the reductions in the same order, which depends on the skips
arising by the repeated halving of $p$. If the input happens to be in
the right order, the algorithm would work for also for a
non-commutative operator. If this is not the case, the input blocks
could be permuted into a suitable order, but this entails a much too
expensive all-to-all redistribution step.

\subsection{An Allreduce Algorithm}

It is easy to see that the allreduce problem can be solved by a
reduce-scatter operation followed by an allgather operation that
gathers together all result blocks at all processors. An allgather
operation can, as classically shown in~\cite{Bruck97}, easily be
implemented by a doubling scheme, essentially the reduce-scatter
algorithm run in reverse.  A variant which exactly reverses the
sequence of skips is shown as Algorithm~\ref{alg:allreduction}.  To
avoid recomputing the skips $s$, they are pushed on a stack during the
reduce-scatter phase and popped in the allgather phase.  This leads
to the following result.

\begin{algorithm}
  \caption{The allreduce algorithm for processor $r,0\leq r<p$ derived
    from the reduce-scatter algorithm by addition of an allgather
    phase that collects all result block on all processors. Each
    processor has an input vector $V$ that can be partitioned into $p$
    blocks of elements.  Each processor $r$ returns in $W$ the
    resulting reduction over all input vectors. The commutative
    operator for pairwise reduction of blocks is $\oplus$.}
  \label{alg:allreduction}
  \begin{algorithmic}
    \Procedure{AllReduce}{$V,W$}
    \State\Comment Assume $V,W$ both partitioned into blocks $V[i],W[i]$
    \For{$i=0,\ldots,p-1$} $R[i]\gets V[(r+i)\bmod p]$
    \EndFor
    \State $s,S \gets p,\bot$ \Comment Empty stack
    \While{$s>1$} \Comment Partitioned all-reduce phase
    \State $\Call{push}{s,S}$ \Comment Push skip $s$ on stack
    \State $s',s\gets s,\ceil{s/2}$ \Comment Halve and round up
    \State $t,f\gets (r+s)\bmod p, (r-s+p)\bmod p$
    \State $\bidirec{R[s\ldots s'-1],t}{T[0\ldots s'-s-1],f}$
    \For{$i=0,\ldots,s'-s-1$}
    $R[i]\gets R[i]\oplus T[i]$
    \EndFor
    \EndWhile
    \While{$S\neq\bot$} \Comment Allgather phase
    \State $s'\gets\Call{pop}{S}$
    \State $f,t\gets (r+s)\bmod p, (r-s+p)\bmod p$
    \State $\bidirec{R[0\ldots s'-s-1],t}{R[s\ldots s'-1],f}$
    \State $s\gets s'$
    \EndWhile
    \For{$i=0,\ldots,p-1$} $W[(r+i)\bmod p]\gets R[i]$
    \EndFor
    \EndProcedure
\end{algorithmic}
\end{algorithm}

\begin{theorem}
  \label{thm:allreduce}
  On $p$ input vectors partitioned into $p$ blocks,
  Algorithm~\ref{alg:allreduction} solves the allreduce problem in
  $2\ceil{\log_2 p}$ send-receive communication rounds. Each processor
  sends and receives exactly $2(p-1)$ blocks of elements and performs
  exactly $p-1$ applications of the commutative reduction operator
  $\oplus$ on blocks of elements.
\end{theorem}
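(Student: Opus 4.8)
The plan is to reuse Theorem~\ref{thm:blockreduce} for the first (reduce-scatter) phase and then show that the second phase is a correct allgather that reverses the block redistribution of the first along the same circulant graph. Write $s_0=p>s_1>\cdots>s_q=1$ for the skip values produced by the repeated halving with rounding up, so $q=\ceil{\log_2 p}$; round $k$ of the first phase ($0\leq k<q$) uses skip $s_{k+1}$ and moves the $s_k-s_{k+1}$ blocks $R[s_{k+1}\ldots s_k-1]$. The \textbf{while}-loop forming the partitioned all-reduce phase of Algorithm~\ref{alg:allreduction} is Algorithm~\ref{alg:blockreduction} verbatim once $W$ is identified with $R[0]$, so Theorem~\ref{thm:blockreduce} applies: that phase runs in $q$ rounds, each processor sends, receives and reduces exactly $p-1$ blocks, and on exit processor $r$ holds $R[0]=\bigoplus_{j=0}^{p-1}V_j[r]$, the skip variable is $s=1$, and the stack holds the $q$ skips $s_0,\ldots,s_{q-1}$. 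The allgather phase pops one skip per iteration, hence performs exactly $q$ further rounds, giving $2q=2\ceil{\log_2 p}$ in total; the iteration reversing round $k$ transfers $s_k-s_{k+1}$ blocks in each direction and applies $\oplus$ zero times, and telescoping $\sum_{k=0}^{q-1}(s_k-s_{k+1})=s_0-s_q=p-1$, so the two phases together send and receive $2(p-1)$ blocks and apply $\oplus$ exactly $p-1$ times.

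The substantive step is correctness of the allgather phase, which I would prove by downward induction on $k=q,q-1,\ldots,0$, maintaining the invariant (for all processors simultaneously): after the allgather iteration that reverses round $k$ of the first phase, every processor $r$ has $R[i]=\bigoplus_{j=0}^{p-1}V_j[(r+i)\bmod p]$ for all $0\leq i<s_k$. The base case $k=q$ is ``before any allgather iteration'': only $R[0]$ is in play and $R[0]=\bigoplus_j V_j[r]$ by the first phase. For the step, the iteration reversing round $k$ uses skip $s_{k+1}$, so processor $r$ sends $R[0\ldots s_k-s_{k+1}-1]$ to $(r-s_{k+1})\bmod p$ and receives $s_k-s_{k+1}$ blocks into $R[s_{k+1}\ldots s_k-1]$ from $f=(r+s_{k+1})\bmod p$. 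The blocks $R[0\ldots s_{k+1}-1]$ are untouched by the send and stay complete by the hypothesis at round $k+1$; applying that hypothesis to $f$, its blocks $R[0\ldots s_k-s_{k+1}-1]$ equal $\bigoplus_j V_j[(f+i')\bmod p]$ for $0\leq i'<s_k-s_{k+1}$, and, received into $R[i'+s_{k+1}]$ at $r$ (using $f+i'\equiv r+s_{k+1}+i'\pmod p$), they give $R[i]=\bigoplus_j V_j[(r+i)\bmod p]$ for $s_{k+1}\leq i<s_k$. This establishes the invariant at round $k$; at $k=0$ it holds for all $0\leq i<s_0=p$, and the concluding rotation $W[(r+i)\bmod p]\gets R[i]$ gives $W[\ell]=\bigoplus_j V_j[\ell]$ on every processor, which is the allreduce result.

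The point needing care --- and the likely obstacle --- is that the inductive step invokes the hypothesis for $f$ only on the prefix $R[0\ldots s_k-s_{k+1}-1]$, which is legitimate only if $s_k-s_{k+1}\leq s_{k+1}$, equivalently $s_k\leq 2\ceil{s_k/2}$; this holds for the roughly halving scheme (with equality when $s_k$ is even) and is exactly the condition that makes a Bruck-style dissemination allgather work, so I would state it explicitly. One must also verify the bookkeeping that the skip threaded between the phases ($s=1$ on entry to the allgather phase) together with the popped value of $s'$ reconstitutes precisely the range $R[s_{k+1}\ldots s_k-1]$ collapsed in round $q-l$ during allgather iteration $l$, and that $s_k>s_{k+1}$ strictly so every transferred range is nonempty --- both immediate from the definition of the skips. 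The round and volume counts of the first paragraph then finish the proof.
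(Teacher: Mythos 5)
Your proof is correct, and it is essentially the argument the paper intends: the paper states Theorem~\ref{thm:allreduce} without an explicit proof, relying on Theorem~\ref{thm:blockreduce} for the first phase and on the observation that the allgather phase replays the skips in reverse. Your downward induction supplies exactly the detail the paper leaves implicit, and you correctly isolate the one condition that makes the dissemination allgather sound, namely $s_k-s_{k+1}=\floor{s_k/2}\leq\ceil{s_k/2}=s_{k+1}$, so nothing is missing.
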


The bound on the number of blocks communicated and the number of
reductions is
optimal~\cite{PatarasukYuan09,BarnettLittlefieldPaynevandeGeijn95}.

\section{Implementation in and for MPI}

Both Algorithm~\ref{alg:blockreduction} and~\ref{alg:allreduction} can
readily be implemented in MPI~\cite{MPI-4.1} with \mpisendrecv or
\mpiisendrecv for the bidirectional, combined $\bidirec{}{}$
operation. Standard considerations as when implementing the doubling
allgather algorithm of Bruck et al.~\cite{Bruck97} apply, see for
instance~\cite{BienzGautamKharel22,Traff06:regallgat}. In particular,
the doubling and halving schemes lead to latency contention and
communication redundancy when run as written on clustered,
hierarchical systems with constrained per node
bandwidth~\cite{Traff20:mpidecomp}.

The algorithms compute the required skips in constant time per
communication rounds. All partial result blocks are kept in
consecutive buffers and no reordering of blocks is needed in the
$\ceil{\log_2 p}$ communication rounds. Reduction and copy
operations can therefore be done as bulk operations over many
blocks. A property of the roughly halving scheme is that no sequence
of blocks is longer than $\ceil{p/2}$. This can be exploited to
avoid half of the copy operations~\cite{Traff23:circulant}. The
standard, straight doubling scheme does not have this
property~\cite{Bruck97}. Explicit copying could be avoided altogether
by using the derived datatype mechanism of MPI~\cite{MPI-4.1}, as done
for all-to-all algorithms in~\cite{Traff14:bruck}; however, copying is
done only before and after the communication rounds, and therefore
this is not likely to be a fruitful implementation choice.
  
\section{Summary}

This paper gave a very simple and easily implementable algorithm
for the reduce-scatter (partitioned all-reduce) operation as found in
message-passing frameworks like MPI~\cite{MPI-4.1}. The algorithm is
optimal in number of communication rounds and in communication and
computation volume. The algorithm was used as building block of an
allreduce algorithm that is likewise optimal in communication and
computation volume.

The circulant graph communication patterns that were used for
reduce-scatter (partitioned all-reduce) and allreduce, can, with some craft,
also be used to solve the all-to-all communication problem in the same
number of communication rounds, similarly to the straight doubling
all-to-all (indexing) algorithms given in~\cite{Bruck97}, namely by
taking the $\oplus$ operator to be concatenation of element blocks.  By
specialization of the algorithms, likewise algorithms for the rooted,
regular scatter and gather problems can easily be derived
(\mpiscatter, \mpigather). We also indicated that the algorithms can
be used for reduction to root (\mpireduce), and by implication, for
broadcast (\mpibcast), and may be attractive for small problem sizes.
Circulant graphs are thus a universal scheme for collective operations
as found in MPI and elsewhere~\cite{Traff23:circulant}.

\bibliographystyle{plainurl}
\bibliography{traff,parallel} 

\end{document}